\let\llncssubparagraph\subparagraph
\let\subparagraph\paragraph
\let\subparagraph\llncssubparagraph
\begin{document}
\title{A HPC Co-Scheduler with Reinforcement Learning}





\author{%
Abel Souza\inst{1} \thanks{Research performed while working at Ume\aa\ University} $^{\textrm{\href{mailto:me@asouza.io}{\Letter}}}$  \and 
Kristiaan Pelckmans\inst{2} \and
Johan Tordsson\inst{3}
}%
\institute{
University of Massachusetts Amherst, USA\\
\email{asouza@cs.umass.edu}\\
\and
Uppsala University, Sweden\\
\email{kristiaan.pelckmans@it.uu.se}\\
\and
Ume{\aa} University, Sweden\\
\email{tordsson@cs.umu.se}
}


\maketitle
\begin{abstract}
Although High Performance Computing (HPC) users understand basic resource requirements such as the number of CPUs and memory limits,
internal infrastructural utilization data is exclusively leveraged by cluster operators, who use it to configure  batch schedulers. 
This task is challenging and increasingly complex due to ever larger cluster scales and heterogeneity of modern scientific workflows. As a result, HPC systems achieve low utilization with long job completion times (makespans).
To tackle these challenges, we propose a co-scheduling algorithm based on an adaptive reinforcement learning algorithm, where application profiling is combined with cluster monitoring. 
The resulting cluster scheduler matches resource utilization to application performance in a fine-grained manner (i.e., operating system level).
As opposed to nominal allocations, we apply decision trees to model applications' actual resource usage, which are used to estimate how much resource capacity from one allocation can be co-allocated to additional applications.
Our algorithm learns from incorrect co-scheduling decisions and adapts from changing environment conditions, and evaluates when such changes cause resource contention that impacts quality of service metrics such as jobs slowdowns.
We integrate our algorithm in an HPC resource manager that combines Slurm and Mesos for job scheduling and co-allocation, respectively.  
Our experimental evaluation performed in a dedicated cluster executing a mix of four real different scientific workflows demonstrates improvements on cluster utilization of up to 51\% even in high load scenarios, with 55\% average queue makespan reductions under low loads.

\end{abstract}

\begin{keywords}
Datacenters, co-scheduling, high performance computing, adaptive reinforcement learning
\end{keywords}
\section{Introduction}

High Performance Computing (HPC) datacenters process thousands of applications supporting scientific and business endeavours across all sectors of society. 
Modern applications are commonly characterized as data-intensive, demanding processing power and scheduling capabilities that are not well supported by large HPC systems~\cite{rodrigo2015a2l2}.
Data-intensive workflows can quickly change computational patterns, e.g., amount of input data at runtime. 
Unfortunately, traditional HPC schedulers like Slurm \cite{slurm02} do not offer Application Programming Interfaces (APIs) that allow users and operators to express these requirements. 
Current resource provisioning capabilities barely satisfy today's traditional non-malleable workloads, and as a result, most HPC centers report long queue waiting times~\cite{archersurvey19}, and low utilization~\cite{ambati2020waiting}.
These problems delay scientific outputs, besides triggering concerns related to HPC infrastructure usage, particularly in energy efficiency and in the gap between resource capacity and utilization~\cite{stevens2020ai}.
%

In contrast to HPC infrastructures that use batch systems and prioritize the overall job throughput at the expense of latency, cloud datacenters favor response time.
On the one hand, cloud resource managers such as Kubernetes/Borg~\cite{burns2016borg} and Mesos~\cite{2011mesos} assume workloads that change resource usage over time and that can scale up, down, or be migrated at runtime as needed~\cite{mell2011nist}.
This model allows cloud operators to reduce datacenter fragmentation and enable low latency scheduling while improving cluster capacity utilization.
On the other hand, HPC resource managers such as Slurm and Torque~\cite{torque06}, assume workloads with fixed makespan and constant resource demands throughout applications' lifespan, which must be acquired before jobs can start execution~\cite{reuther2018scalable, feitelson1996toward}. 
Consequently, to improve datacenter efficiency and to adapt to quick workload variations, resources should be instantly re-provisioned, pushing for new scheduling techniques. 

Thus, this paper proposes an algorithm with strong theoretical guarantees 
for co-scheduling batch jobs and resource sharing.
Considering the problem of a HPC datacenter where jobs need to be scheduled and cluster utilization needs to be optimized, we develop a Reinforcement Learning (RL) algorithm that models a co-scheduling policy that minimizes idle processing capacity and also respects jobs' Quality of Service (QoS) constraints, such as total runtimes and deadlines (Section \ref{sec:arch}).
Our co-scheduling algorithm is implemented by combining the Slurm batch scheduler and the Mesos dynamic job management framework.
The potential benefit of more assertive co-allocation schemes is substantial, considering that -- in terms of actual resource utilization, as opposed to nominal allocations -- the current HPC practice often leaves computational units idling for about 40\%~\cite{tirmazi2020borg, reiss2012heterogeneity}. 

We evaluate our solution by experiments in a real cluster configured with three different sizes and four real scientific workflows with different compute, memory, and I/O patterns. 
We compare our co-scheduling strategy with traditional space-sharing strategies that do not allow workload consolidation and with a static time-sharing strategy that equally multiplexes the access to resources.
Our RL co-scheduler matches applications QoS guarantees by using a practical algorithm that improves cluster utilization by up to 51\%, with 55\% reductions in queue makespans and low performance degradation (Section~\ref{sec:eval}).

\section{Background and Challenges}\label{sec:background}

The emergence of cloud computing and data-intensive processing created a new class of complex and dynamic applications that require low-latency scheduling and which are increasingly being deployed in HPC datacenters.
Low latency scheduling requires a different environment than batch processing that is dominant in HPC environments. 
HPC systems are usually managed by centralized batch schedulers that require users to describe allocations by the total run time (makespan) and amounts of resources, such as the number of CPUs, memory, and accelerators, e.g., GPUs~\cite{feitelson2014experience, reuther2018scalable, slurm02}.
As depicted in Figure~\ref{fig:batch_queue}, in the traditional HPC resource reservation model, known as space sharing, each job arrives and waits to be scheduled in a queue until enough resources that match the job's needs are available for use.
In this scenario, the batch system uses backfilling~\cite{lifka1995anlibm} to keep resources allocated and maximize throughput, 
and users benefit from having jobs buffered up (queued) and scheduled together, a technique known as gang scheduling~\cite{feitelson1996packing}. 

The main scheduling objectives in traditional HPC is performance predictability for parallel jobs, achieved at the expense of potential high cluster fragmentation,  scalability and support for low latency jobs.
In addition, this model assumes that the capacity of allocated computing resources is fully used, which is rarely the case~\cite{tirmazi2020borg}, especially at early stages of development.
This static configuration can be enhanced through consolidation, where jobs that require complementary resources share the same (or parts of) physical servers, which ultimately increases cluster utilization.
In HPC, consolidation happens mostly in non-dedicated resource components, such as the shared file systems and the network. 
Resource managers, such as Slurm, allow nodes to be shared among multiple jobs, but do not dynamically adjust the preemption time slices. 



\subsection{Resource Management with Reinforcement Learning}

Reinforcement learning (RL) is a mathematical optimization problem with states, actions, and rewards, 
where the objective is to maximize rewards~\cite{monahan1982state}.
In HPC, this can be formulated as a set of cluster resources (i.e., the environment) to where jobs need to be assigned by following an objective function (i.e., the rewards). 
Three common objective functions in HPC are minimizing expected queue waiting times, guaranteeing fairness among users, and increasing the cluster utilization and density.
The scheduler's role is to model the cluster -- i.e., to map applications behaviour to provisioned resources -- through actionable interactions with the runtime system, such as adjusting cgroup limits~\cite{cgroup}.
Finally, to evaluate actions, the scheduler can observe state transitions in the cluster and subsequently calculate their outcomes to obtain the environment's reward. 
One example of action used extensively in this work is to co-schedule two applications A and B onto the same server and measure the reward (or loss) in terms of runtime performance. 
The latter compared to a scenario with exclusive node access for A and B. 
In this RL framework, the scheduler is the learner and interactions allow it to model the environment, i.e., the quantitative learning of the mapping of actions to outcomes observed by co-scheduling applications.
A reward (or objective) function describes how an agent (i.e., a scheduler) should behave, and works as a normative evaluation of what it has to accomplish. 
There are no strong restrictions regarding the characteristics of the reward function, but if it quantifies observed behaviors well, the agent learns fast and accurately.

\subsection{Challenges}
Common HPC resource managers do not profile jobs nor consider job performance models during scheduling decisions.
Initiatives to use more flexible policies are commonly motivated by highly dynamic future exascale applications, where runtime systems need to handle orchestration due to the large number of application tasks spawned at execution time~\cite{zhang2019rlscheduler, castain2018pmix}.
Currently, dynamic scheduling in HPC is hindered as jobs come with several constraints due to their tight coupling, including the need for periodic message passing for synchronization and checkpointing for fault-tolerance~\cite{gainaru2019speculative}.
Characteristics such as low capacity utilization \cite{ambati2020waiting} show a potential to improve cluster efficiency, where idle resources -- ineffectively used otherwise -- can be allocated to other applications with opposite profile characteristics.

Thus, different extensions to the main scheduling scheme are of direct relevance in practical scenarios.
Server \textit{capacity} can be viewed as a single-dimensional variable, though in practice it is multivariate and includes CPU, memory, network, I/O, etc.
As such, a practical and natural extension to this problem is to formulate it as a multi-dimensional metric \cite{li2014dynamic}.
However, co-scheduling one or more jobs affects their performance as a whole \cite{janus2017slo}, meaning that the actual capacity utilization of a job depends on how collocated jobs behave, making this capacity problem even more complex.

\begin{figure}[!ht]
    \centering
    \subfigure[Batch Architecture]
    {
        \includegraphics[width=0.7\textwidth]{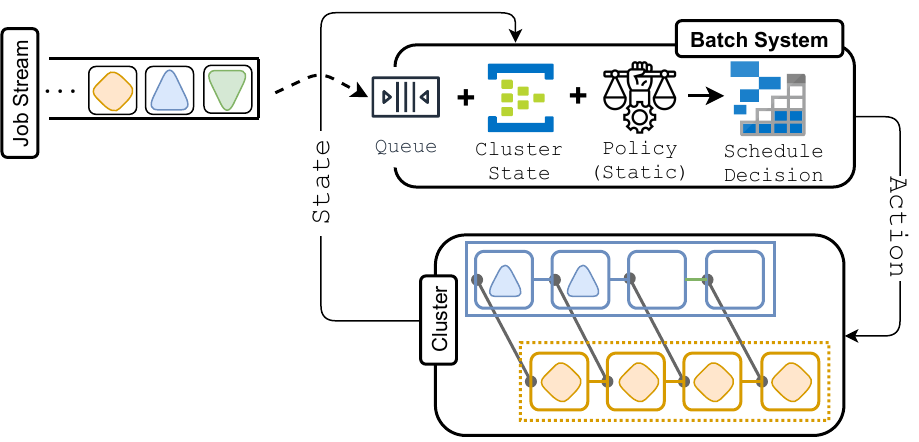}
        \label{fig:batch_queue}
    }
    \subfigure[ASA$_X$ Architecture]
    {
        \includegraphics[width=0.7\textwidth]{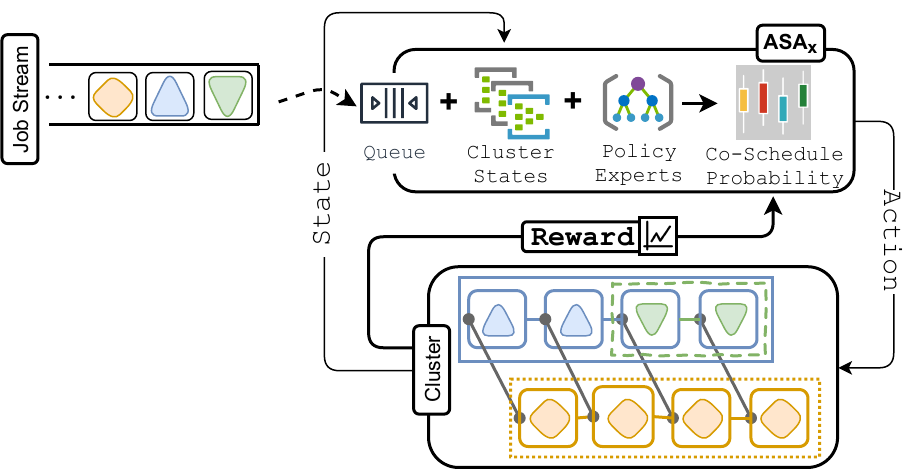}
        \label{fig:asax_architecture}
    }
    \caption{(a) Batch and (b) ASA$_X$ Architectures: (a) A traditional batch system such as Slurm. The Upside down triangle (green) job waits for resources although the cluster is not fully utilized; (b) ASA$_X$, where rewards follow co-scheduling decision actions, steered through \textit{Policy Experts}. In this example, the upside down triangle job (green) is co-allocated with the triangle job (blue).}
\end{figure}

\subsection{The Adaptive Scheduling Architecture} \label{subsec:asa}

ASA -- The Adaptive Scheduling Architecture is an architecture and RL algorithm that reduces the user-perceived waiting times by accurately estimating queue waiting times, as well as optimizes scientific workflows resource and cluster usage~\cite{asa}.
ASA encapsulates application processes into containers and enable fine-grained control of resources through and across job allocations.
From user-space, ASA enables novel workflow scheduling strategies, with support for placement, resource isolation and control, fault tolerance, and elasticity.
As a scheduling algorithm, ASA performs best in settings where similar jobs are queued repeatedly as is common setting in HPC. 
However, ASA's inability to handle states limits its use in broader scheduling settings.
As this paper shows, embedding states such as application performance, users' fair-share priorities, and the cluster capacity directly into the model enables more accurate scheduling decisions.

\section{A Co-Scheduler Architecture and Algorithm}\label{sec:arch}

In this section we introduce ASA$_X$: the HPC co-scheduler architecture and its algorithm. 
The goals of ASA$_X$ are higher cluster throughput and utilization, while also controlling performance degradation as measured by application completion time.
Additionally, ASA$_X$ introduces the concept of cluster and application states (Section~\ref{sec:arch}).
When measured, these concepts influence how each co-scheduling action is taken.
Handling states is the main difference to ASA and enables the creation of scheduling strategies that achieve an intelligent exploitation (finding the Pareto frontier) of the space spanned by the application QoS requirements and the available compute capacity.
As such, evaluating ASA$_X$'s performance depends on how time-sharing strategies compare to space-sharing strategies, the default policy of most HPC clusters that
guarantees predictable performance.
We handle the problem of growing number of actions by incorporating expert decision tree structures that can be computed efficiently.
Together with the loss functions, decision trees can be used efficiently, and easily be updated to modify the probabilities that affect how co-scheduling actions are chosen and how resources are used.

\subsection{Architecture and Algorithm Overview}\label{sec:arch}


Fig. \ref{fig:asax_architecture} shows ASA$_X$'s architecture, where jobs are queued as in regular HPC environments, but allocated resources are shared with other jobs. 
The main differences from regular batch scheduling (illustrated in Fig. \ref{fig:batch_queue}) is that a single deterministic cluster policy is replaced with a probabilistic policy that evolves as co-scheduling decision outcomes are evaluated.
The job co-scheduling is implemented by an algorithm (see Section~\ref{sec:alg_details}) that creates a collocation probability distribution by combining the cluster and application states, the cluster policy, and the accumulated rewards. 
Note that the cluster 'Policy' is analytically evaluated through decision trees (DTs) (Fig. \ref{fig:dt}), which correspond to the (human) \textit{experts} role in RL.
Experts map the applications and system online features, represented by internal cluster metrics such as CPU and memory utilization, into collocation decisions.
Combined through multiple DTs, these metrics form the cluster \textit{state}.
Each leaf in the DT outputs a multi-variate distribution p$_{i,j}$ that represents the likelihood of taking a specific action, and combined they affect how co-scheduling decisions are made.
In here, actions are defined by how much CPU cgroup \cite{cgroup} quota each job receives from the resources allocated to them, which influence how the operating system scheduler schedules application processes at runtime.
Another difference with traditional HPC architectures is the addition of a feedback mechanism, where \textit{rewards} are accumulated after each co-scheduling action.
This mechanism allows the architecture to asses the outcome of a co-scheduling decision to improve future decisions.
To enable a job-aware scheduler, decisions are implemented in a per-job basis (see Section~\ref{sec:alg_details}).
Thus, ASA$_X$ aims at discovering which expert is the best to minimize the performance degradation faced by a job due to co-scheduling decisions.

There can be an arbitrary number of experts, and they can be described by anything that is reasonably lean to compute, from functions to decision trees.
However, the combination of all experts needs to approximate the current cluster state as precisely as possible.
To define and compute experts, in this paper we use metrics such as CPU ({\fontfamily{qcr}\selectfont CPU\%}) and memory utilization ({\fontfamily{qcr}\selectfont Mem\%}), workflow stage {\fontfamily{qcr}\selectfont type} (i.e. \textit{sequential}, where only one core is utilized, or \textit{parallel}, where all cores are utilized), time {\fontfamily{qcr}\selectfont interval} since the job started execution related to its duration (e.g. 25\%, 50\%, 75\%). 
In addition to these metrics, we also define {\fontfamily{qcr}\selectfont Hp(t)}, a {\fontfamily{qcr}\selectfont happiness} metric that is defined at time $t$ for a given job as

\begin{equation}
{\fontfamily{qcr}\selectfont Hp(t)} = \frac{|t_{Walltime} - t| * \# Remaining Tasks}{\#Tasks/s}.
\label{eq.happiness}
\end{equation}
Devised from a similar concept \cite{lakew2015performance}, the happiness metric relates the job's remaining execution time with its remaining amount of work.
It enables the analytical evaluation of a running job in regards to the allocated resource capacity and its overall throughput.
Given the remaining time ($|t_{Walltime} - t|$) and performance ($\#Tasks/s$), if {\fontfamily{qcr}\selectfont Hp(t)} is near to, or greater than $1.0$, then it can be inferred that the job is likely to complete execution within the wall-time limit; else (if {\fontfamily{qcr}\selectfont Hp(t)} $< 1.0$), the job is likely to not complete by the wall-time.
It is important to note that the {\fontfamily{qcr}\selectfont Hp(t)} metric best describes jobs that encapsulate one application process.
 This is achieved by encapsulating each workflow stage into a single job, as is common in scientific workflows (see Section \ref{sec:eval}).

\begin{figure}[t] \centering
	\includegraphics[width=0.85\columnwidth]{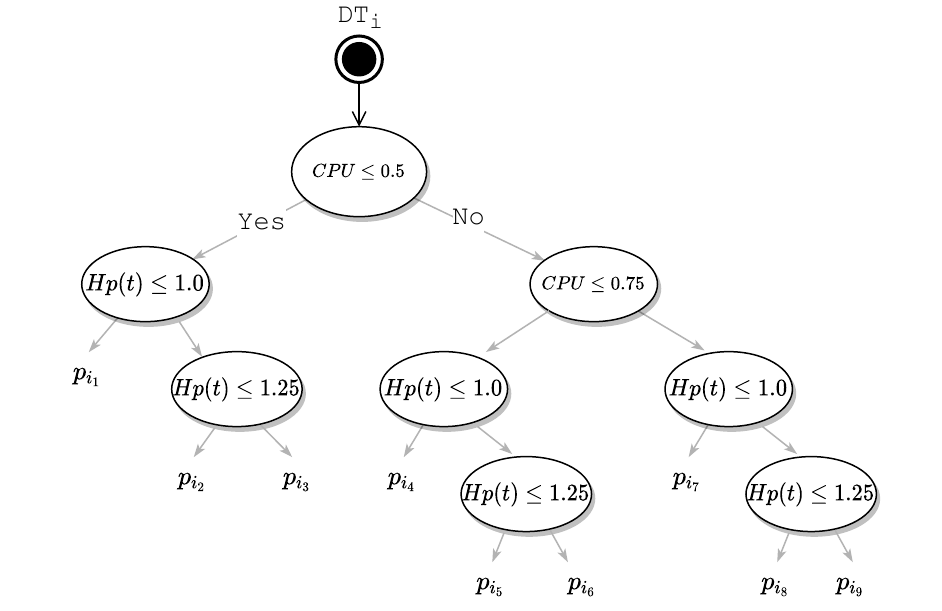}
	\DeclareGraphicsExtensions.
	\caption{A Decision tree (DT) expert structure illustrating the evaluation of the 'CPU' state in an allocation. For each DT 'Metric ({\fontfamily{qcr}\selectfont CPU\%}, {\fontfamily{qcr}\selectfont Mem\%}, {\fontfamily{qcr}\selectfont type}, and {\fontfamily{qcr}\selectfont interval}), an action strategy is devised by combining four different distributions $p_{i}$. Then, depending on the state for each 'Metric', one distribution among nine ($p_{i_1}, ..., p_{i_9}$) is returned. }
	\label{fig:dt}
\end{figure}

\subsubsection{Cluster Policy.}

We combine states through DTs for each metric with the {\fontfamily{qcr}\selectfont happiness} metric.
Figure \ref{fig:dt} illustrates one possible expert for the CPU state.
It evaluates if the {\fontfamily{qcr}\selectfont CPU\%} is high (e.g. $> 0.75$), and if so, and if {\fontfamily{qcr}\selectfont Hp(t)} is near to $1.0$,  
the policy expects high performance degradation due to job collocation.
Conversely, if the {\fontfamily{qcr}\selectfont CPU\%} is low, and the {\fontfamily{qcr}\selectfont Hp(t)} is greater than $1.1$, the performance degradation is likely to be small, and the expert increases the probability for jobs to be collocated.
Rather than describing these relations explicitly, we work with a mixture of different decision trees. 
Hence, $\pv_{i,j}(\xv)$ represents how likely an action $a_j$ is to be taken given the evaluated state $x_t$, according to the $i$-th expert in DT$_i$.
For instance, if there are four CPU quota co-allocations e.g. (0\%, 25\%, 50\%, 75\%), each degrading {\fontfamily{qcr}\selectfont Hp(t)} differently,
 {\fontfamily{qcr}\selectfont DT1} may evaluate how state $x_t$ is mapped into a distribution of performance degradation $\pv_{i,j}(\xv)$ for a situation where {\fontfamily{qcr}\selectfont CPU\% > 0.75}, and {\fontfamily{qcr}\selectfont Hp(t) $\approx$ 1.0}. 
This scenario could return $\pv_{1}(\xv) = (0.6, 0.3, 0.05, 0.05)$, meaning action $a_0$ (= 0\%), i.e. no co-allocation, is likely the best co-schedule decision.
Another DT2 could evaluate another state, e.g. related to memory or to how a given special resource behaves, etc., returning a different $\pv_{i,j}(\xv)$, with a different co-allocation action, impacting the final $\pv$ accordingly (line 6 in Algorithm \ref{alg.asax}).
In general, we design our decision trees to suggest more collocation when $Hp(t) \geq 1.0$, and ASA$_X$ looks for other ongoing allocations whenever a 0\% co-allocation is the best decision for a certain running job.

\subsection{Algorithm Details}\label{sec:alg_details}

\begin{table}
\caption{Algorithm variables and their descriptions.}
\label{tab:alg_desc}
\centering
\resizebox{0.9\textwidth}{!}{%
\begin{tabular}{lll}
\midrule
\textbf{Feature} & \textbf{Name}                                              & \textbf{Description} \\ \midrule
\textit{\{a\}}   & Action set                                                 & \begin{tabular}[c]{@{}l@{}}Actions that users or operators can take while\\scheduling or allocating resources to applications.\end{tabular} \\ \hline
\textit{DT$_i$}   & \begin{tabular}[c]{@{}l@{}}Decision Tree\end{tabular}    & \begin{tabular}[c]{@{}l@{}}Used to assess the weight a given application\\or resource metric value ought to influence actions.\end{tabular}  \\\hline
R$_i$             & Risk                                                       & Risk for using DT$_i$. \\ \hline
r$_i$             & \begin{tabular}[c]{@{}l@{}}Accumulated\\Risk\end{tabular} & \begin{tabular}[c]{@{}l@{}}Accumulation of all risks incurred by DT$_i$.\end{tabular} \\ \hline
$\ell$(a)             & \begin{tabular}[c]{@{}l@{}}Loss function\end{tabular}    & \begin{tabular}[c]{@{}l@{}}The performance degradation due to collocation.\end{tabular}                                                      \\ \midrule
\end{tabular}
}
\end{table}

In here we describe the internal details of how ASA$_X$ is implemented, with variable and details presented in Table \ref{tab:alg_desc}. 
A co-scheduling action depends on the associated state $\xv\in\R^d$, where $d$ depends on the number of selected cluster metrics as shown in Fig. \ref{fig:dt}.
A state $\xv$ is a vector that combines the assessment of all distributions output by the DTs.
At any time $t$ that a co-scheduling decision is to be made, one action $a$ out of
the $m$ possible actions $a = \{a_0, a_1, \dots, a_{m-1}\}$ is taken with the system in state $\xv$.
However, while in state $\xv$, co-allocating resources from a running Job$_a$ to a Job$_b$ may degrade Job$_a$'s performance, and this incurs in an associated loss $\ell(a)$.
This overall idea is represented in Algorithm \ref{alg.asax}, where we have a nested loop.
The outer loop (starting in line 1) initiates and updates the parameters for each scheduling decision that can be made.
Once an ongoing job allocation candidate is presented, the inner loop (line 3) co-allocates resources to other jobs until the cumulative loss exceeds a threshold, in which case learning needs to be updated.
In mathematical terms, let $\pv:\R^d\rightarrow\R^m$ be a function of states $\xv\in\R^d$ to co-allocation actions, and the goal of learning is to optimally approximate this mathematical relationship.
The central quantity steering $\pv$ is the {\em risk} $\RM$, here defined for each DT$_i$ as a vector in $\R^n$ where 
\begin{equation}
	\RM_i = \ell(a) \pv_i(a).
	\label{eq.asax.}
\end{equation}
To quantify the performance degradation due to co-scheduling at iteration $t$, we maintain a vector $\rv_t$ of the total accumulated risk (line 10).
At each co-scheduling decision at time $t$, we let $\rv_{t,i}$ denote the accumulated risk of the $i$th expert.
ASA$_X$ then creates a strategy to minimize $\rv_i$: 

\begin{equation}
	\min_\ast \sum_{s=1}^t \sum_{j=1}^{n_s} \pv_\ast(\xv_{sj}) \ell\left(a_{sj}\right).
	\label{eq.excess}
\end{equation}

Note that the strategy minimising $\rv_t$ corresponds to the maximum likelihood estimate, meaning we want to follow the expert that assigns the highest probability to the actions with the lowest loss.
In here $m$ actions $a = \{a_0, a_1, \dots, a_{m-1}\}$ correspond to a discretization of resource allocations, and are expressed in terms of CPU\% quota allocation ratios of $m$ intervals, e.g. if $m=10$ then $a = \{a_{0}=0\%, a_{1}=10\%, \dots, a_{9} = 90\%\}$.
This means a fraction $a_m$ of an ongoing $Job_a$ allocation is re-allocated to a $Job_b$.
When co-scheduling $a_m$ from an allocation to other job, we calculate the performance degradation loss $\ell(a)$, limiting it to $1.0$ and proportionally to the actual job submission execution time, normalized by its wall-time.
For example, if an user requests a wall-time of 100s, and the submission finishes in $140s$, the loss would be proportional to $min(1.0, |140-100|/100)$.
On the other hand, if the submission finishes in 70s, then the loss is $0$, because it successfully completed within its wall-time limits.
ASA$_X$ then optimizes $\pv$ over the $m$ quota allocations (actions) according to the loss $\ell(a)$ and the accumulated risks $\rv_{ti}$ for $Job_a$.
Accumulated risks help in cases where the co-scheduling strategy incurs in high performance degradation.
Therefore, in Algorithm \ref{alg.asax} we set a threshold $\rv_t$ (line 3) bounding the accumulated risk for a given $Job_a$, and reset it in the situation when the co-location actions resulted in large performance degradation losses (line 12).
This allows Algorithm \ref{alg.asax} to update its knowledge about $Job_a$ and bound the accumulated mistakes.
Notably, we prove that the {\em excess risk} $E_t$ after $t$ co-scheduling decisions is bound, meaning the algorithm converges to an optimal point in a finite time after a few iterations 
(see Appendix~\ref{sec:appendix}).


\begin{algorithm}[!ht]
\caption{ASA$_{\mbox{x}}$}
\label{alg.asax}
\begin{algorithmic}[1]
\REQUIRE \ \\Queued $Job_b$\\ $Job_a$ allocation satisfying $Job_b$ {\fontfamily{qcr}\selectfont \textbf{\#}In terms of resource}\\$m$ co-allocation actions $a$, e.g. $m=10$ and $a = \{a_0=0\%,...,a_{m-1}=90\%\}$ \\Initialise $\alpha_{0i}=\frac{1}{n}$ for $i=1, \dots,n$ {\fontfamily{qcr}\selectfont \textbf{\#}metrics in state $\xv$, e.g. CPU, Mem.}\\ \
\FOR{$t=1,2, \dots$}
\STATE Initialise co-allocation risk $\rv_{ti}=0$ for each $i$-th metric\\
{\fontfamily{qcr}\selectfont \textbf{\#}Co-allocation assessment:}
\WHILE{$\max_i \rv_{ti} \leq 1$}
\STATE Evaluate $Job_a$'s state $\xv$:
\STATE \hspace{.4cm}Compute each $i$th-DT metric expert $\pv_i(\xv)$ $\in \R^m$
\STATE \hspace{.4cm}Aggregate $Job_a$'s co-schedule probability as $\pv=\sum_{i=1}^n \alpha_{t-1,i} \pv_i(\xv) \in\R^m$
\STATE $j$ = sample one action from $a$ according to $\pv$
\STATE Allocate $a_j$ from $Job_a$'s to co-schedule $Job_b$ 
\STATE Compute $Job_a$'s performance loss $|\ell(a_j)| \leq 1$ due to co-allocation
\STATE For all $i$, update $Job_a$'s risk $\rv_{ti} = \rv_{ti} + \pv_i(a_j) \ell\left(a_j\right)$
\ENDWHILE
\STATE For $Job_a$ and for all $i$, update $\alpha_{t,i}$ as
	\[ \alpha_{t,i} = \frac{\alpha_{t-1,i}}{N_t} \times e^{- \gamma_t \rv_{ti}}\]
	where $N_t$ is a normalising factor such that $\sum_{i=1}^n \alpha_{t,i} = 1$.
\ENDFOR
\end{algorithmic}
\end{algorithm}

Finally, rigid jobs can only start execution when all requested resource are provisioned (see Section \ref{sec:background}). 
To simultaneously achieve this and to improve cluster utilization, HPC infrastructures use \textit{gang scheduling} \cite{feitelson1996packing}.
By running on top of Mesos~\cite{2011mesos} (see Section \ref{subsec:asa}), ASA$_X$ can scale to manage thousands of distributed resources. 
These features include fine-grained resource isolation and control (through cgroup quotas), fault tolerance, elasticity, and support for new scheduling strategies.
However, although Mesos supports resource negotiation, it does not support gang scheduling, nor timely reservation scheduling.
Therefore, ASA$_X$ also works as a framework to negotiate resource offers from Mesos in a way that satisfies queued jobs' requirements and constraints.
ASA$_X$ withholds offers, allocating them when enough resources fulfill the job's requirements, and releasing the remaining ones for use by other allocations.
This is a basic version of the \textit{First fit} capacity based algorithm~\cite{feitelson1996packing}.
Our focus is towards improving cluster throughput and utilization, which may have direct impacts on reducing queue waiting times and jobs response times.
Although in general many other aspects such as data movement and locality also influence parallel job scheduling, we do not (explicitly) address them. 

\section{Evaluation} \label{sec:eval}

In this section we evaluate ASA$_X$ with respect to workloads' total makespan, cluster resource usage, and workflows total runtime.
We compare ASA$_X$ against a default backfilling setting of Slurm and a static co-allocation setting.

\subsection{Computing System}

The experimental evaluation is performed on a system with high performance networks, namely a NumaScale system~\cite{rustad2013numascale} with 6 nodes, each having two AMD Opteron Processor (6380) 24-cores and 185 GB memory.
The NumaConnect hardware interconnects these nodes and appears to users as one large \textit{single} server, with a total of 288 cores and 1.11 TB of memory.
Memory coherence is guaranteed at the hardware level and totally transparent to users, applications, and the OS itself.
Servers are interconnected through a switch fabric 2D Torus network which supports sub microsecond latency accesses.
The NumaScale storage uses a XFS file system, providing 512 GB of storage.
The system runs a CentOS 7 (Kernel 4.18) and jobs are managed by Slurm (18.08) with its default backfilling setting enabled. 
When managing resources in the static and ASA$_X$ settings, jobs are managed by Mesos 1.9 that uses our own framework/co-scheduler on top of it (see Fig. \ref{fig:asax_architecture} and Section \ref{sec:arch}).

\subsection{Applications}


Four scientific workflows with different characteristics were selected for our evaluation Montage, BLAST, Statistics, and Synthetic.
\\
\\
\noindent\textbf{Montage} \cite{berriman2004montage} is an I/O intensive application that constructs the mosaic of a sky survey. 
The workflow has nine stages, grouped into two parallel and two sequential stages.
All runs of Montage construct an image survey from the {\fontfamily{qcr}\selectfont 2mass} Atlas images.
\\
\\
\noindent\textbf{BLAST} \cite{altschul1997gapped} is a compute intensive applications comparing DNA strips against a database larger than 6 GB. 
It maps an input file into many smaller files and then reduces the tasks to compare the input against the large sequence database. 
BLAST is composed of two main stages: one parallel followed by one sequential.
\\
\\
\noindent \textbf{Statistics} is an I/O and network intensive application which calculates various statistical metrics from a dataset with measurements of electric power consumption in a household with an one-minute sampling rate over a period of almost four years \cite{kolter2011redd}. 
The statistics workflow is composed mainly of two stages composed of two sequential and two parallel sub-stages. The majority of its execution time is spent exchanging and processing messages among parallel tasks.
\\
\\
\noindent \textbf{Synthetic} is a two stages workflow composed of a sequential and a parallel task.
This workflow is both data and compute intensive. 
It first loads the memory with over one billion floating point numbers (sequential stage), and then performs additions and multiplications on them (parallel stage).

\subsection{Metrics}

The following metrics are used in the evaluation.
The \emph{total runtime} is measured by summing up the execution times for each workflow stage, submitted as separate jobs. Equally important, \emph{the response time} (also known as makespan, or flow time) is defined as the time elapsed between the submission and the time when the job finishes execution.
A related metric is the waiting time, which is the time one job waits in the queue before starting execution. Additional metrics are CPU and memory utilization as measured by the Linux kernel.
These latter two capture the overall resource utilization, and aids understanding of how well  co-schedulers such as ASA$_X$ model application performance. 


\subsection{Workloads}

To evaluate ASA$_X$ co-scheduling we compare it against a Static CPU quota configuration and a default (space-sharing) Slurm setting with backfilling enabled.
We execute the same set of 15 workflow jobs (4 Montage, 4 BLAST and 4 Statistics [8, 16, 32, 64 cores], and 3 Synthetic [16, 32, 64 cores]) three times, each with three size configurations: namely 64 (x2), 128 (x4), and 256 (x8) cores.
The workflows have different job geometry scaling requests ranging from 8 cores to up to 64 cores (smallest cluster size, x2), totalling 512 cores and 45 job submissions for each cluster size.
This workload selection demonstrates how the different scheduling strategies handle high (cluster size x8 = 256 cores), medium (x4 = 128 cores), and low loads (x2 = 64 cores), respectively.
When comparing the Static setting and ASA$_X$ against Slurm, neither of them get access to more resources (i.e. cores) than the cluster size for each experimental run.
In all experiments, Slurm statically allocates resources for the whole job duration. 
Moreover, when scheduling jobs in the Static configuration, collocations of two jobs in a same server are allowed and the time-sharing CPU quota ratio is set to 50\% for each job (through cgroups \cite{cgroup}). 
Collocation is also done for ASA$_X$, but the CPU quotas are dynamically set and updated once the rewards are collected according to Algorithm \ref{alg.asax}.
Notably, to reduce scheduling complexity, our First Fit algorithm does not use Mesos resource offers coming from multiple job allocations.
Finally, as mentioned in the previous section, the loss function $\ell\left(a\right)$ to optimize the co-schedule actions is calculated proportionally to the user requested wall-time and to the actual workflow runtime, similarly to the {\fontfamily{qcr}\selectfont Hp(t)} metric.
The $\ell\left(a\right)$ values span 0.0 and 1.0, where 1.0 means performance degraded by at least 50\%, and 0.0 means no performance degradation.

\subsection{Results}

Figures \ref{fig:makespan} and \ref{fig:runtime}, and Tables \ref{tab:results} and \ref{tab:summary} summarize all experimental evaluation, which is discussed below.

\subsubsection{Makespan and Runtimes.}

Figures \ref{fig:makespan} and \ref{fig:runtime} show respectively queue workload makespans and workflow runtimes, both in hours.
We note that the total makespan reduces as the cluster size is increased.
As expected, Slurm experiments  have small standard deviations as they use the space-sharing policy and isolate jobs with exclusive resource access throughout their lifespan.
Static allocation yields longer makespans as this method does not consider actual resource usage by applications, but rather takes deterministic decisions about co-scheduling decisions, i.e. allocates half the CPU capacity to each collocated application.
In contrast, ASA$_X$, reduces the overall workload makespan by up to 12\% (64 cores) as it learns overtime which jobs do not compete for resources when co-allocated.
For this reason, some initial ASA$_X$ scheduling decisions are incorrect, which explains the higher standard deviations shown in Fig. \ref{fig:makespan}.

Table \ref{tab:results} and Figure \ref{fig:runtime} show the aggregated average runtimes for each workflow, with their respective standard deviations.
Notably, the high standard deviations illustrate the scalability of each workflow, i.e. the higher standard deviation, the more scalable the workflow is (given the same input, as it is the case in the experiments).
Figure \ref{fig:runtime}  shows different runs for the same application with different number of cores. 
If an application is scalable, the more cores the allocation gets, the faster the application completes execution, and the larger the standard deviation. 
Conversely, when the application is not scalable, performance is not improved when more cores are added to the allocation. 
The application cannot take advantage of extra resources, resulting in more idle resources, thus reducing the standard deviation between the runs.
Finally, BLAST and Synthetic are two very scalable, CPU intensive workloads, which do not depend on I/O and network as Montage and Statistics do.
The Static strategy has the highest standard deviations overall because its workload experiences more performance degradation when compared to both Slurm and ASA$_X$, which this is due to its static capacity allocation.

\begin{figure*}[!ht]
    \centering
    \subfigure[Total workload queue makespan per strategy and cluster size]
    {
        \includegraphics[width=0.8\linewidth]{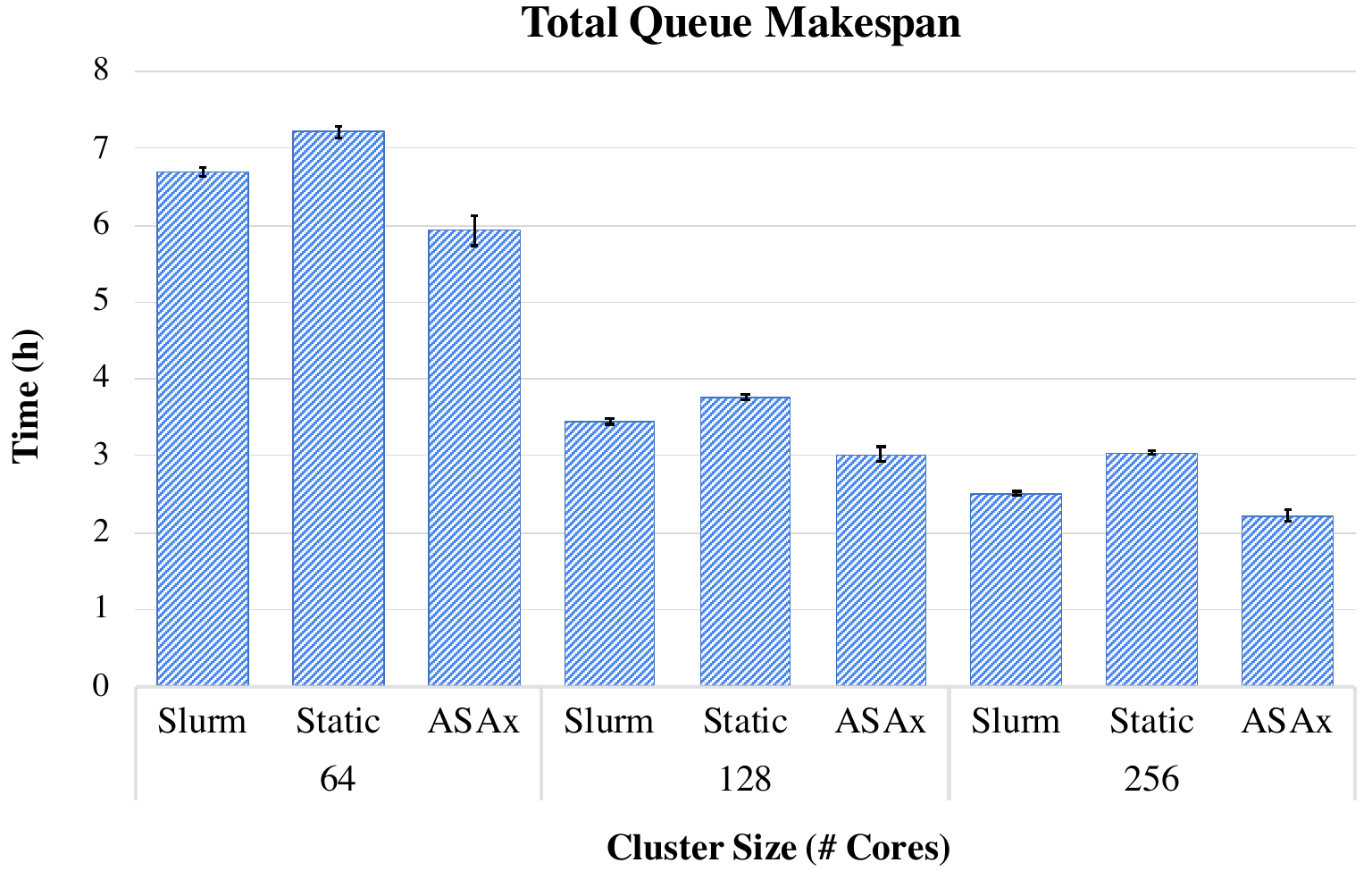}
        \label{fig:makespan}
    }
    \subfigure[Average total runtime per strategy and cluster size]
    {
        \includegraphics[width=0.8\linewidth]{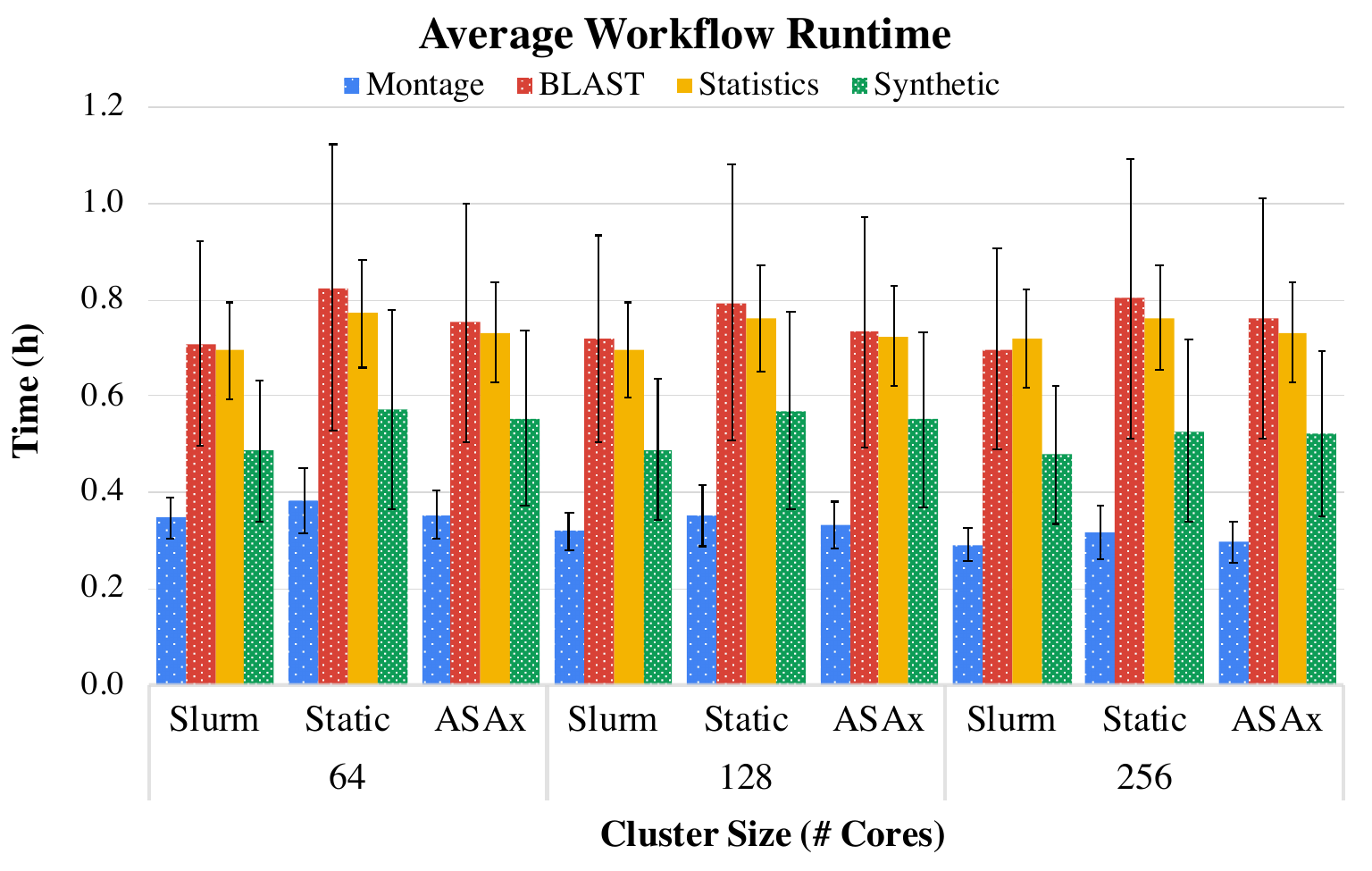}
        \label{fig:runtime}
    }
    \caption{Slurm, Static, and ASA$_X$ strategy results - Total (a) Queue makespan and (b) runtime of each cluster size (64, 128, and 256 cores) and scheduling strategy.}
    \label{fig:exps}
\end{figure*}

Table \ref{tab:results} shows workflows runtime as well as CPU and memory resource utilization, all normalized against Slurm. 
It can be seen that ASA$_X$ is close to Slurm regarding average total runtimes, with low overall overheads up to 10\%, reaching as low as 3\% increase for Montage.
A predictable, but notable achievement for both the Static setting and ASA$_X$ is the improved CPU utilization.
As only a specific fraction of resources are actually allocated in both strategies, the utilization ratio increases considerably.
For instance, when a job requests 1 CPU, the Static setting allocates 50\% of one CPU to one job and the other 50\% is co-allocated to other job.
For non CPU intensive workloads (Montage and Statistics), this strategy results in higher utilization ratios because such workloads consume less than the upper bound capacity, which is noticeable when comparing to Slurm.
For CPU intensive workloads such as BLAST, this strategy hurts performance, extending the workflow total runtime, most noticeably in the Static results.
In contrast, as ASA$_X$ learns overtime that co-scheduling jobs with either BLAST or Synthetic results in poor performance, its scheduling decisions become biased towards co-allocating Montage and Statistics workflows, but not BLAST or Synthetic. 
Also notable is the increased CPU utilization for the Statistics workflow, a non CPU intensive workload.
The only workload capable of utilizing most of the memory available in the system is the Synthetic workflow, which is both memory and CPU intensive.
This property makes ASA$_X$ avoid co-placing any other workload with Synthetic jobs, as memory is one of the key metrics in our decision trees (see previous section).

\begin{table}[!ht]
\caption{Workflows summary for Slurm, Static, and ASA$_X$ in three different cluster sizes. CPU Util. and Mem Util. represent resource utilization (\%) proportionally to total allocated capacity. 
Normalized averages over Slurm are shown below results for each cluster size. Acronyms: WF (Workflow), Stats (Statistics), and Synth. (Synthetic).
}
\label{tab:results}
\centering
\resizebox{\textwidth}{!}{%
\begin{tabular}{@{}r|r|rrr|rrr|rrrr@{}}
\toprule
\multicolumn{2}{c|}{} & \multicolumn{3}{c|}{\textbf{Slurm}} & \multicolumn{3}{c|}{\textbf{Static}} & \multicolumn{3}{c}{\textbf{ASA$_X$}} \\ \midrule
\multicolumn{1}{c|}{\textbf{\begin{tabular}[c]{@{}r@{}}Cluster\\Size\end{tabular}}} & \textbf{\begin{tabular}[c]{@{}r@{}}WF\end{tabular}} & \textbf{\begin{tabular}[c]{@{}r@{}}Runtime\\(h)\end{tabular}} & \textbf{\begin{tabular}[c]{@{}r@{}}CPU\\Util.\\(\%)\end{tabular}} & \textbf{\begin{tabular}[c]{@{}r@{}}Mem\\Util.\\(\%)\end{tabular}} & \textbf{\begin{tabular}[c]{@{}r@{}}Runtime\\(h)\end{tabular}} & \textbf{\begin{tabular}[c]{@{}r@{}}CPU\\Util.\\(\%)\end{tabular}} & \textbf{\begin{tabular}[c]{@{}r@{}}Mem\\Util.\\(\%)\end{tabular}} & \textbf{\begin{tabular}[c]{@{}r@{}}Runtime\\(h)\end{tabular}} & \textbf{\begin{tabular}[c]{@{}r@{}}CPU\\Util.\\(\%)\end{tabular}} & \textbf{\begin{tabular}[c]{@{}r@{}}Mem\\Util.\\(\%)\end{tabular}}
\\ \midrule
\multirow{4}{*}{\rotatebox[origin=c]{90}{\underline{64}}} & Montage & 0.35 $\pm$4\% & 45 $\pm$5 & 10 $\pm$5 & 0.38 $\pm$7\% & 94 $\pm$5 & 10 $\pm$5 & 0.35 $\pm$5\% & 95 $\pm$4 & 10 $\pm$4 \\
 & BLAST & 0.71 $\pm$21\% & 96 $\pm$4 & 44 $\pm$5 & 0.82 +30\% & 99 $\pm$1 & 44 $\pm$5 & 0.75 $\pm$25\% & 99 $\pm$1 & 44 $\pm$5 \\
 & Stats & 0.69 $\pm$10\% & 26 $\pm$2 & 5 $\pm$1 & 0.77 $\pm$30\% & 51 $\pm$3 & 6 $\pm$1 & 0.73 $\pm$11\% & 61 $\pm$4 & 5 $\pm$1 \\
 & Synth. & 0.49 $\pm$15\% & 99 $\pm$1 & 95 $\pm$3 & 0.57 $\pm$21\% & 99 $\pm$1 & 95 $\pm$1 & 0.56 $\pm$18\% & 99 $\pm$1 & 95 $\pm$1 \\
 \midrule
\multicolumn{2}{c|}{\textit{\textbf{\begin{tabular}[c]{@{}c@{}}Normalized \\ Average\end{tabular}}}} & - & - & - & \textit{+10\%} & \textit{+230\%} & \textit{0\%} & \textit{+3\%} & \textit{+227\%} & \textit{0\%} \\ \midrule
\multirow{4}{*}{\rotatebox[origin=c]{90}{\underline{128}}} & Montage & 0.32 $\pm$4\% & 33 $\pm$5 & 7 $\pm$3 & 0.35 $\pm$6\% & 73 $\pm$5 & 7 $\pm$3 & 0.33 $\pm$5\% & 75 $\pm$5 & 7 $\pm$3 \\
 & BLAST & 0.72 $\pm$22\% & 95 $\pm$3 & 45 $\pm$5 & 0.79 $\pm$29\% & 99 $\pm$1 & 44 $\pm$5 & 0.73 $\pm$24\% & 99 $\pm$1 & 44 $\pm$5 \\
 & Stats & 0.70 $\pm$10\% & 16 $\pm$2 & 1 $\pm$1 & 0.76 $\pm$11\% & 36 $\pm$6 & 2 $\pm$1 & 0.72 $\pm$11\% & 43 $\pm$7 & 1 $\pm$1 \\
 & Synth. & 0.50 $\pm$15\% & 99 $\pm$1 & 93 $\pm$4 & 0.57 $\pm$21\% & 99 $\pm$1 & 93 $\pm$3 & 0.55 $\pm$18\% & 99 $\pm$1 & 93 $\pm$4 \\
 \midrule
\multicolumn{2}{c|}{\textit{\textbf{\begin{tabular}[c]{@{}c@{}}Normalized \\ Average\end{tabular}}}} & - & - & - & \textit{+19\%} & \textit{+5\%} & \textit{0\%} & \textit{+9\%} & \textit{+5\%} & \textit{0\%} \\ \midrule
\multirow{4}{*}{\rotatebox[origin=c]{90}{\underline{256}}} & Montage & 0.29 $\pm$4\% & 21 $\pm$5 & 5 $\pm$3 & 0.32 $\pm$6\% & 55 $\pm$5 & 5 $\pm$3 & 0.30 $\pm$4\% & 51 $\pm$5 & 5 $\pm$3 \\
 & BLAST & 0.70 $\pm$21\% & 91 $\pm$3 & 42 $\pm$6 & 0.80 $\pm$29\% & 99 $\pm$1 & 43 $\pm$6 & 0.76 $\pm$25\% & 99 $\pm$1 & 43 $\pm$6 \\
 & Stats & 0.72 $\pm$10\% & 10 $\pm$2 & 1 $\pm$1 & 0.76 $\pm$11\% & 26 $\pm$2 & 1 $\pm$1 & 0.73 $\pm$11\% & 86 $\pm$6 & 1 $\pm$1 \\
 & Synth. & 0.48 $\pm$14\% & 99 $\pm$1 & 92 $\pm$2 & 0.53 $\pm$19\% & 99 $\pm$1 & 91 $\pm$2 & 0.52 $\pm$17\% & 99 $\pm$1 & 93 $\pm$1 \\
 \midrule
\multicolumn{2}{c|}{\textit{\textbf{\begin{tabular}[c]{@{}c@{}}Normalized \\ Average\end{tabular}}}} & - & - & - & \textit{+10\%} & \textit{+36\%} & \textit{0\%} & \textit{+5\%} & \textit{+50\%} & \textit{+1\%} \\ 
 \bottomrule
\end{tabular}
}
\end{table}

\begin{table}
\caption{Slurm, Static, and ASA$_X$ - Average results for three strategies in each cluster size.}
\label{tab:summary}
\centering
\resizebox{0.7\textwidth}{!}{%
\begin{tabular}{@{}rrr|rrrrr@{}}
\toprule

\multicolumn{1}{c}{\textbf{}} & \textbf{\begin{tabular}[c]{@{}r@{}}Cluster\\Size\end{tabular}} & \textbf{\begin{tabular}[c]{@{}r@{}}Cluster\\Load\end{tabular}} & \textbf{\begin{tabular}[c]{@{}r@{}}Waiting\\Time (h)\end{tabular}} & \textbf{\begin{tabular}[c]{@{}r@{}}CPU \\Util. (\%)\end{tabular}} & \textbf{\begin{tabular}[c]{@{}r@{}}Response\\Time (h) \end{tabular}} \\ \midrule
\multirow{3}{*}{\rotatebox[origin=c]{90}{\textbf{\underline{Slurm}}}} & 64 & 8x & 3.5$\pm$1\% & 53$\pm$5 & 4.4$\pm$1\% \\
 & 128 & 4x & 1.5$\pm$1\% & 45$\pm$5 & 2.4$\pm$1\% \\ 
 & 256 & 2x & 0.5$\pm$1\% & 32$\pm$6 & 1.4$\pm$1\% \\ \midrule
\multirow{3}{*}{\rotatebox[origin=c]{90}{\underline{\textbf{Static}}}} & 64 & 8x & 1.7$\pm$1\% & 90$\pm$5 & 4.8$\pm$1\% \\
 & 128 & 4x & 0.8$\pm$1\% & 92$\pm$3 & 2.8$\pm$1\% \\
 & 256 & 2x & 0.3$\pm$1\% & 89$\pm$5 & 2.0$\pm$1\% \\ \midrule
\multirow{3}{*}{\rotatebox[origin=c]{90}{\textbf{\underline{ASA$_X$}}}} & 64 & 8x & 1.8$\pm$5\% & 82$\pm$7 & 3.5$\pm$3\% \\
 & 128 & 4x & 1.0$\pm$8\% & 84$\pm$5 & 2.0$\pm$2\% \\
 & 256 & 2x & 0.3$\pm$7\% & 83$\pm$4 & 0.9$\pm$2\% \\ \bottomrule
\end{tabular}
}
\end{table}

\subsubsection{Aggregated Queue and Cluster Metrics.}

Whereas total runtime results are important for the individual users, aggregated cluster metrics such as queue waiting times matter for cluster operators. 
The latter metrics are summarized in Table \ref{tab:summary} that shows average waiting times (h), cluster CPU utilization (\%), and response times (h).
The key point in Table \ref{tab:summary} relates to both average response time and queue waiting time.
Queue waiting times are reduced by as much as 50\% in both Static and ASA$_X$ compared to Slurm.
ASA$_X$ has 55\% lower response time (256 cores) than Slurm or Static, which shows that it makes co-scheduling decisions that -- for the proposed workload -- result in fast executions. 
Similarly to the Static setting, ASA$_X$ also increases cluster (CPU) utilization by up to 59\%, but with the advantage of low performance degradation while also reducing queue waiting times considerably. 
This also happens for low cluster loads, as can be seen in the x2 workload case, with 51\% average waiting time reductions.
Conversely, Slurm decreases CPU utilization as the load decreases, and it is thus not able to improve response time.
\section{Discussion} \label{sec:disc}

The evaluation demonstrates how ASA$_X$ combines application profiling with an assertive learning algorithm to simultaneously improve resource usage and cluster throughput, with direct reductions on the workload makespan.
By combining an intuitive but powerful abstraction (decision tree experts) and an application agnostic (happiness) performance, ASA$_X$ efficiently co-schedules jobs and improves cluster throughput. The overall performance degradation experienced by ASA$_X$ (10\% average runtime slowdown) is negligible when compared to its benefits, in particular as HPC users are known to overestimate walltimes~\cite{uchronski2018user, cirne2001comprehensive, rocchetti2016penalty}.
When compared to Slurm, a very common batch system used in HPC infrastructures, ASA$_X$ reduces  average job response times by up to 10\%, enabling HPC users to achieve faster time to results.
Notably, these improvements are achieved also for lightly loaded clusters, where the co-scheduling actions optimize cluster resources and allows them to be shared with other jobs.

It is important to note that current HPC scheduling strategies, specially those based on backfilling, aim to maximize resource allocation at a coarse granularity level.
It is generally assumed this is good policy, as it guarantees high allocation ratios without interfering with users' workflows and thus achieves predictable job performance.
However, such strategies do not take advantage of the characteristics of modern dynamic workflows,
namely adaptability to changed resource allocation, faults, and even on input accuracy.
Nor does backfilling strategies take advantage of modern operating system capabilities such as
fine-grained processes scheduling and access control such as Linux \cite{cgroup}, available in userspace. Notably in HPC, even classical rigid jobs may suffer from traditional space-sharing policies, as the reservation based model does not take into consideration how resources are utilized in runtime. 
This is due to historical reasons, mostly related to CPU scarcity and the need for application performance consistency and SLO guarantees. 
In contrast, by leveraging the dynamic nature of modern workflows, and by using well fine-grained resource control mechanisms such as cgroups, ASA$_X$ is able to improve the cluster utilization without hurting jobs performance. 
Although applicable across a wide range of HPC environments, such scheduling features are particularly suitable for rackscale systems like Numascale that offers close to infinite vertical scaling of applications. 
In such systems, scheduling mechanisms like ASA$_X$ are key as the default Linux Completely Fair Scheduler (CFS) -- beyond cache-locality -- does not understand fine-grained job requirements, such as walltimes limits and data locality.

Given one cluster workload, the goals of ASA$_X$ are higher cluster throughput and increased resource utilization. A constraint for these goals is the performance degradation caused by resource sharing and collocation, as measured by applications' completion time. In our experiments, the same queue workload is submitted to three resource capacity scenarios and cluster sizes that creates low, medium, and high demand, respectively. This captures the main characteristics that many HPC clusters face as the cluster size and workload directly affect how resources should be scheduled, in particular given the goal to maximize resource utilization without affecting application runtime.
%
%
%
As shown in Table 2 (Response Time), and due to collocation and sharing of resources, ASA$_X$ impacts all application's completion time. This is mitigated by decisions the algorithm takes at runtime, which uses the applications state to evaluate the impacts of each collocation. However, it must also be noted that no hard deadlines are set for jobs' wall-clock times. This allows comparison of ASA$_X$ to worst case scenarios, and enables evaluation of the overhead caused by resource sharing due to collocation.
%
Collocation has neglectable total runtime impacts (up to 5\% in our experiments) and users are known to request more time than needed when submitting their jobs. Some predictable noisy interference is bound to happen at runtime, and ASA$_X$'s goal is simply to model this expected behavior.
By learning from mistakes, ASAx avoids collocating two applications that may negatively impact the scheduler's loss function as described by the decision trees, evaluated at each time that the scheduler has to make a collocation decision.

Finally, the proposed \textit{happiness} metric (Eq. \ref{eq.happiness}) can enable ASA$_X$ to  mitigate and control the possible performance degradation due to bad collocation decisions.
Together with the decision trees, this characteristic is specially useful for queues that share nodes among two or more jobs, e.g. debugging queues used during scientific applications development stages.
It is important to note though, that the happiness metric assumes all tasks in a single job are sufficiently homogeneous, i.e., need approximately the same amount of time to complete execution.
This is however the common case in most stages of a scientific workflow, as in-stage heterogeneity leads to inefficient parallelization and poor performance. 
As such, monitoring $Hp(t)$ before and after the co-scheduling of a job can also be useful to understand if such a decision actually is likely to succeed. However, this is outside the scope of this paper but can be seen as an natural extension to our co-scheduling algorithm as such feature can enable ASA$_X$ to foresee performance degradation in running applications.
This can ultimately enable ASA$_X$ to optimize its co-scheduling actions already before collocation,
which further protects performance of collocated jobs. 
Another possible extensions is to generalize the ASA$_X$ allocation capacity to accommodate more than two jobs in the same node simultaneously as long as they all impose limited performance degradation on one other.
\section{Related Work} \label{sec:related}
The scale and processing capacities of modern multi-core and heterogeneous servers create new HPC resource management opportunities.
With increasing concerns in datacenter density, idle capacity should be utilized as efficiently as possible~\cite{stevens2020ai}. 
However, the low effective utilization of computing resources in datacenters is explained through certain important requirements. 
For instance, to guarantee user Quality-of-Services (QoS), datacenter operators adopt a worst-case policy in resource allocations~\cite{yang2013bubble}, while in HPC, users expect stable performance guarantees.
To reach QoS guarantees, users often overestimate resource requests and total execution times (wall-clock), further degrading datacenter throughput.
Runtime Service Level Objectives (SLOs) are achieved through more efficient job scheduling, but this is rarely deployed in modern HPC infrastructures due to fear of performance degradation when sharing resources. 
Most of HPC batch systems use resource reservation with backfilling strategies, and rely on users to provide application resource and walltime requirements~\cite{slurm02, torque06}.
This model is based on the assumption that jobs fully utilize the allocated capacity constantly throughout their lifespan.
This is needed for some purposes such as cache optimizations, debugging, and is thus important during application development and testing.
However, more often than not, jobs utilize less than this upper capacity~\cite{tirmazi2020borg}, which is rarely the case, specially at early stages of development.
Some proposals \cite{ahn2014flux, domeniconi2019cush} aim to extend this traditional model, with impacts need to be studied in more depth.
Public cloud datacenters, on the other hand, offer alternatives for running HPC workloads, such as Kubernetes \cite{burns2016borg}, and Mesos \cite{2011mesos}.
Their resource management policies are centered around low latency scheduling, offering fairness and resource negotiation \cite{2011drf}.
However, these systems lack main capabilities such as resource reservation, gang scheduling, and also assume that applications fully utilize allocated resources \cite{tirmazi2020borg, feitelson1996packing}.

Tackling the utilization problem from a different angle, stochastic schedulers have been proposed as solutions to overcome user resource overestimates~\cite{gainaru2019speculative}. 
ASA$_X$ can be used as an extension to this class of schedulers, because it offers a new scheduling abstraction through its experts, which can model stochastic applications as well. 
Similarly to ASA$_X$, Deep RL schedulers have been proposed \cite{zhang2019rlscheduler, li2020dynamic, mao2019learning, domeniconi2019cush}, though they focus either on dynamic applications, or on rigid-jobs.
As previously discussed, an increasingly diverse set of applications are flooding new HPC infrastructure realizations.
\cite{patel2020clite} uses Bayesian optimization to capture performance models between low latency and batch jobs.
It differs from our experts approach in that ASA$_X$ uses decision trees to assess such relationship, and \cite{patel2020clite} targets only two types of applications.
Machine Learning (ML) models have been proposed to take advantage of the large datasets already available in current datacenters~\cite{carastan2017obtaining}.
For instance, \cite{moradi2020upredict} repeatedly runs micro-benchmarks in order to build a predictive model for the relationship between an application and resource contention.
In contrast to that type of ML, RL algorithms such as the one used in ASA$_X$ and in \cite{zhang2019rlscheduler} do not require historical training data to optimize scheduling.
As mentioned in previous sections, ASA$_X$ is a stateful extension of \cite{asa}, where the main difference is that ASA$_X$ can incorporate previous decisions in a RL approach.
In \cite{thamsenhugo}, the authors combine offline job classification with online RL to improve collocations. This approach can accelerate convergence, although it might have complex consequences when new unknown jobs do not fit into the initial classification.
Similarly to ASA$_X$, the Cognitive ScHeduler (CuSH) \cite{domeniconi2019cush} decouples job selection from the policy selection problem and also handles heterogeneity.
Finally, as an example of scheduling policy optimization, \cite{zhang2019rlscheduler} selects among several policies to adapt the scheduling decisions.
This approach is similar to our concept of a forest of decision trees (experts), although their work do not consider co-scheduling to target queue waiting time minimization combined with improved cluster utilization.
\section{Conclusion} \label{sec:conclusion}

Since mainframes, batch scheduling has been an area of research, and even more since time-sharing schedulers were first proposed.
However, today's HPC schedulers face a diverse set of workloads and highly dynamic requirements, such as low latency and streaming workflows generated by data-intensive applications.
These workflows are characterized by supporting many different features, such as system faults, approximate output, and resource adaptability.
However, current HPC jobs do not fully utilize the capacity provided by these high-end infrastructures, impacting datacenter operational costs, besides hurting user experience due to long waiting times.
To mitigate these, in this paper we propose a HPC co-scheduler (ASA$_X$) that uses a novel and convergence proven reinforcement learning algorithm.
By analytically describing a cluster's policy through efficient decision trees, our co-scheduler is able to optimize job collocation by profiling applications and understanding how much of ongoing allocations can be safely reallocated to other jobs.
Through real cluster experiments, we show that ASA$_X$ is able to improve cluster CPU utilization by as much as 51\% while also reducing response time and queue waiting times, hence improving overall datacenter throughput, with little impacts on job runtime (up to 10\%, but 5\% slower on average).
Together with the architecture, our algorithm forms the basis for efficient resource sharing and an application-aware co-scheduler for improved HPC scheduling with minimal performance degradation.
\appendix{}
\section{Convergence of ASA$_{\mbox{X}}$} \label{sec:appendix}

In here we explain how we can statistically bound the accumulated loss and create a strategy to take the best available actions in a given environment.
For doing so, we define the \textit{excess risk}, which estimates how "risky" taking an action can be.
It is defined here as 
\begin{equation}
\begin{split}
	E_t  
	= 
	\sum_{s=1}^t \sum_{j=1}^{n_s} \left( \sum_{i=1}^n \alpha_{s-1,i}\pv_i(\xv_{sj}) \ell\left(a_{sj}\right) \right)
	- \\
	\min_\ast \sum_{s=1}^t \sum_{j=1}^{n_s} \pv_\ast(\xv_{sj}) \ell\left(a_{sj}\right),
	\label{eq.excess}
\end{split}
\end{equation}
where $\xv_{sj}$ denotes the decision tree states of the $j$-th case in the $s$-th round, and where
$a_{sj}$ is the action taken at this case.

Theorem \ref{th.asax} shows that the excess risk for Algorithm \ref{alg.asax} is bound as follows.
\begin{Theorem}
	Let $\{\gamma_t>0\}_t$ be a non-increasing sequence.
	The excess risk $E_t$ after $t$ rounds is then bound by
	\begin{equation}
		E_t \leq \gamma_t ^{-1}\left(\ln n + \frac{1}{2}\sum_{s=1}^t  \gamma_s^2\right).
		\label{eq.asax}
	\end{equation}
	\label{th.asax}
\end{Theorem}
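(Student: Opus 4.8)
The plan is to recognise Theorem~\ref{th.asax} as an instance of the classical weighted‑majority / exponential‑weights regret bound and to prove it by a potential‑function argument. First I would fix notation: for round $s$ and expert $i$ write $\rv_{s,i}=\sum_{j=1}^{n_s}\pv_i(\xv_{sj})\,\ell(a_{sj})$ for the risk that expert $i$ accrues over the $n_s$ cases of that round (this is exactly the quantity $\rv_{ti}$ accumulated in line~10 of Algorithm~\ref{alg.asax}), so that $\rv_{s,i}\ge 0$, and — because the inner \textbf{while} loop runs only while $\max_i\rv_{ti}\le 1$ and each increment $\pv_i(a_j)\ell(a_j)$ lies in $[0,1]$ — each $\rv_{s,i}$ is bounded by a universal constant. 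Unrolling the multiplicative update of line~12 gives $\alpha_{t,i}\propto\alpha_{0,i}\exp\!\big(-\sum_{s=1}^t\gamma_s\rv_{s,i}\big)$, so I would introduce the potential $W_t=\sum_{i=1}^n\exp\!\big(-\sum_{s=1}^t\gamma_s\rv_{s,i}\big)$ with $W_0=n$ and $\alpha_{t,i}=w_{t,i}/W_t$.

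The core of the argument is to sandwich $\ln(W_t/W_0)$. For the upper bound, write $W_t/W_{t-1}=\sum_i\alpha_{t-1,i}e^{-\gamma_t\rv_{t,i}}$, apply $e^{-x}\le 1-x+\tfrac12x^2$ (valid for $x\ge 0$) together with $\sum_i\alpha_{t-1,i}\rv_{t,i}^2\le\max_i\rv_{t,i}^2=O(1)$ and $1+u\le e^u$ to obtain $\ln(W_t/W_{t-1})\le-\gamma_t\langle\alpha_{t-1},\rv_t\rangle+\tfrac12\gamma_t^2$; summing over $s=1,\dots,t$ telescopes to $\ln(W_t/W_0)\le-\sum_{s=1}^t\gamma_s\langle\alpha_{s-1},\rv_s\rangle+\tfrac12\sum_{s=1}^t\gamma_s^2$. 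For the lower bound, keep only the best expert $i^\ast$ attaining the minimum in the excess‑risk definition: $W_t\ge\exp\!\big(-\sum_{s=1}^t\gamma_s\rv_{s,i^\ast}\big)$, hence $\ln(W_t/W_0)\ge-\sum_{s=1}^t\gamma_s\rv_{s,i^\ast}-\ln n$. Combining the two estimates yields the learning‑rate‑weighted regret inequality $\sum_{s=1}^t\gamma_s\big(\langle\alpha_{s-1},\rv_s\rangle-\rv_{s,i^\ast}\big)\le\ln n+\tfrac12\sum_{s=1}^t\gamma_s^2$.

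The final step converts this into the stated bound on $E_t$ by exploiting that $\{\gamma_s\}$ is non‑increasing, so $\gamma_s\ge\gamma_t$ for every $s\le t$; dividing through by $\gamma_t$ produces the factor $\gamma_t^{-1}\big(\ln n+\tfrac12\sum_{s=1}^t\gamma_s^2\big)$, and recognising $\sum_{s=1}^t\big(\langle\alpha_{s-1},\rv_s\rangle-\rv_{s,i^\ast}\big)$ as $E_t$ closes the proof.

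The main obstacle is precisely this last conversion. Because only the cumulative (over $s$) regret is controlled, the per‑round terms $\langle\alpha_{s-1},\rv_s\rangle-\rv_{s,i^\ast}$ can individually be negative, so one cannot factor $\gamma_t$ out of the sum term by term without extra care. I expect to handle this either by (i) specialising to a constant schedule $\gamma_s\equiv\gamma$, where the bound collapses to the familiar $\gamma^{-1}\ln n+\tfrac12\gamma t$, or (ii) invoking the boundedness $\rv_{s,i}\in[0,O(1)]$ guaranteed by the reset in lines~2--3 to absorb the slack incurred when replacing each $\gamma_s$ by $\gamma_t$, or (iii) a self‑confident / time‑varying‑potential refinement. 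A secondary point to check carefully is the quadratic step $e^{-x}\le 1-x+\tfrac12x^2$ and the constant in $\sum_i\alpha_{t-1,i}\rv_{t,i}^2$, which is exactly where the $\max_i\rv_{ti}\le 1$ loop guard enters.
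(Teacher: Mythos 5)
Your argument tracks the paper's proof essentially step for step up to the weighted regret inequality: the paper works with the same potential $Z_t=\sum_{i=1}^n\exp\bigl(-\sum_{s\le t}\gamma_s\sum_{j=1}^{n_s}\pv_i(\xv_{sj})\ell_{sj}(a_{sj})\bigr)$, lower-bounds $\ln Z_t$ by retaining a single expert $\ast$, and upper-bounds $\ln(Z_s/Z_{s-1})$ via Hoeffding's lemma, using the loop guard $\max_i\rv_{ti}\le 1$; your Taylor bound $e^{-x}\le 1-x+\tfrac12x^2$ plays the identical role and produces the same $\tfrac12\gamma_s^2$ term, so up to the inequality $\sum_{s=1}^t\gamma_s\bigl(\sum_{j}\sum_i\alpha_{s-1,i}\pv_i(\xv_{sj})\ell_{sj}(a_{sj})-\sum_j\pv_\ast(\xv_{sj})\ell_{sj}(a_{sj})\bigr)\le\ln n+\tfrac12\sum_{s=1}^t\gamma_s^2$ the two derivations are interchangeable.

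The genuine gap is exactly the step you flag as the ``main obstacle'': converting this $\gamma_s$-weighted bound into the unweighted bound on $E_t$, and none of your three proposed escape routes establishes Theorem~\ref{th.asax} as stated. Route (i) (constant $\gamma$) proves a strictly weaker statement than the theorem, which is asserted for any non-increasing sequence $\{\gamma_t\}$; route (ii) does not work as described, since replacing $\gamma_s$ by $\gamma_t$ term by term costs $\sum_{s<t}(\gamma_s-\gamma_t)d_s$, where $d_s=\sum_j\bigl(\sum_i\alpha_{s-1,i}\pv_i(\xv_{sj})-\pv_\ast(\xv_{sj})\bigr)\ell_{sj}(a_{sj})$, and the boundedness of the per-round risks only controls this by a quantity growing with $t$ that is not absorbed by the right-hand side; route (iii) amounts to analyzing a different algorithm than Algorithm~\ref{alg.asax}. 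The paper closes this step by appealing to Abel's second inequality, i.e.\ summation by parts: writing $\sum_{s\le t}\gamma_s d_s=\gamma_t D_t+\sum_{s<t}(\gamma_s-\gamma_{s+1})D_s$ with $D_k=\sum_{s\le k}d_s$ and using $\gamma_s\ge\gamma_{s+1}$ (together with control of the partial sums $D_s$) to retain only $\gamma_t D_t=\gamma_t E_t$, after which dividing by $\gamma_t$ yields \eqref{eq.asax}. That summation-by-parts argument is the missing idea in your write-up; your instinct that this is the delicate point is correct (the paper itself dispatches it in one line), but as submitted your proof stops short of the stated bound.
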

\begin{proof}
	Let $a_{tj}$ denote the action taken in round $t$ at the $j$th case, and let  $n_t$ denote the number of cases in round $t$.
	Similarly, let $\xv_{tj}$ denote the state for this case, and let $\pv_i(\xv_{tj})$ denote the distribution over the $a$ actions as proposed by the $i$th expert.
	Let $\ell_{tj}(a)$ denote the (not necessarily observed) loss of action $a$ as achieved on the $tj$th case.
	
	Define the variable $Z_t$ as 
	\begin{equation}
		Z_t = \sum_{i=1}^n \exp\left( - \sum_{s=1}^t \gamma_s \sum_{j=1}^{n_s} \pv_i(\xv_{sj}) \ell_{sj}\left(a_{sj}\right) \right).
		\label{eq.asax.}
	\end{equation}
	Then 
	\begin{equation}
		\sum_{s=1}^t\ln \frac{Z_s}{Z_{s-1}} 
		= 
		\ln Z_t - \ln Z_0.
		\label{eq.asax.}
	\end{equation}
	Moreover
	\begin{equation}
	\begin{split}
		\ln Z_t 
		= 
		\ln \sum_{i=1}^n \exp\left( - \sum_{s=1}^t \gamma_s \sum_{j=1}^{n_s} \pv_i(\xv_{sj}) \ell_{sj}\left(a_{sj}\right) \right)
		\geq \\	
		- \sum_{s=1}^t \gamma_s \sum_{j=1}^{n_s} \pv_\ast(\xv_{sj}) \ell_{sj}\left(a_{sj}\right),
		\label{eq.asax.}
		\end{split}
	\end{equation}
	for any expert $\ast \in\{1,\dots,n\}$.
	Conversely, we have
	\begin{multline}
		\ln \frac{Z_t}{Z_{t-1}} 
		= 
		\ln \frac{\sum_{i=1}^n \exp\left( -  \sum_{s=1}^t \gamma_s \sum_{j=1}^{n_s} \pv_i(\xv_{sj}) \ell_{sj}\left(a_{sj}\right)  \right)}
			{\sum_{i=1}^n \exp\left( - \sum_{s=1}^{t-1} \gamma_s \sum_{j=1}^{n_s} \pv_i(\xv_{sj}) \ell_{sj}\left(a_{sj}\right)  \right)} \\
		=
		\ln \sum_{i=1}^n \alpha_{t-1,i} \exp\left( - \gamma_t \sum_{j=1}^{n_t} \pv_i(\xv_{tj}) \ell_{tj}\left(a_{tj}\right)  \right).
		\label{eq.asax.}
	\end{multline}
	Application of Hoeffding's lemma gives then
	\begin{equation}
		\ln \frac{Z_s}{Z_{s-1}} 
		\leq 
		- \gamma_s \sum_{j=1}^{n_s} \left( \sum_{i=1}^n \alpha_{s-1,i}\pv_i(\xv_{sj}) \ell_{sj}\left(a_{sj}\right) \right)  + \frac{4\gamma_s^2}{8},
		\label{eq.asax.}
	\end{equation}
	using the construction that $\max_i\sum_{j=1}^{n_t} \pv_i(\xv_{sj}) \ell_{sj}\left(a_{sj}\right)\leq 1$. 
	Reshuffling terms gives then
	\begin{equation}
	\begin{split}
		 \sum_{s=1}^t \gamma_s \sum_{j=1}^{n_s} \left( \sum_{i=1}^n \alpha_{s-1,i}\pv_i(\xv_{sj}) \ell_{sj}\left(a_{sj}\right) \right)
		- \\
		 \sum_{s=1}^t \gamma_s \sum_{j=1}^{n_s} \pv_\ast(\xv_{sj}) \ell_{sj}\left(a_{sj}\right) 
		\leq
		\ln n + \frac{1}{2}\sum_{s=1}^t  \gamma_s^2.
		\label{eq.asax.}
		\end{split}
	\end{equation}
	Application of Abel's second inequality~\cite{abelineq} gives the result.
\end{proof}

%
%
\bibliographystyle{splncs04}
\bibliography{references}

\begin{thebibliography}{10}
\providecommand{\url}[1]{\texttt{#1}}
\providecommand{\urlprefix}{URL }
\providecommand{\doi}[1]{https://doi.org/#1}

\bibitem{archersurvey19}
{ARCHER User Survey} (2019),
  \url{https://www.archer.ac.uk/about-archer/reports/annual/2019/ARCHER_UserSurvey2019_Report.pdf}

\bibitem{abelineq}
Abel, N.H.: Abel inequality - encyclopedia of mathematics (2021),
  \url{https://encyclopediaofmath.org/index.php?title=Abel_inequality&oldid=18342}

\bibitem{ahn2014flux}
Ahn, D.H., Garlick, J., Grondona, M., Lipari, D., Springmeyer, B., Schulz, M.:
  Flux: a next-generation resource management framework for large hpc centers.
  In: 43rd International Conference on Parallel Processing Workshops. IEEE
  (2014)

\bibitem{altschul1997gapped}
Altschul, S.F., Madden, T.L., Sch{\"a}ffer, A.A., Zhang, J., Zhang, Z., Miller,
  W., Lipman, D.J.: Gapped blast and psi-blast: a new generation of protein
  database search programs. Nucleic acids research  (1997)

\bibitem{ambati2020waiting}
Ambati, P., Bashir, N., Irwin, D., Shenoy, P.: Waiting game: optimally
  provisioning fixed resources for cloud-enabled schedulers. In: International
  Conference for High Performance Computing, Networking, Storage and Analysis
  (2020)

\bibitem{berriman2004montage}
Berriman, G., Good, J., Laity, A., Bergou, A., Jacob, J., Katz, D., Deelman,
  E., Kesselman, C., Singh, G., Su, M.H., et~al.: Montage: A grid enabled image
  mosaic service for the national virtual observatory. In: Astronomical Data
  Analysis Software and Systems (ADASS) XIII (2004)

\bibitem{burns2016borg}
Burns, B., Grant, B., Oppenheimer, D., Brewer, E., Wilkes, J.: Borg, omega, and
  kubernetes. Queue  (2016)

\bibitem{carastan2017obtaining}
Carastan-Santos, D., De~Camargo, R.Y.: Obtaining dynamic scheduling policies
  with simulation and machine learning. In: Proceedings of the International
  Conference for High Performance Computing, Networking, Storage and Analysis
  (2017)

\bibitem{castain2018pmix}
Castain, R.H., Hursey, J., Bouteiller, A., Solt, D.: Pmix: process management
  for exascale environments. Parallel Computing  (2018)

\bibitem{cirne2001comprehensive}
Cirne, W., Berman, F.: A comprehensive model of the supercomputer workload. In:
  Proceedings of the Fourth Annual IEEE International Workshop on Workload
  Characterization. WWC-4 (Cat. No. 01EX538). pp. 140--148. IEEE (2001)

\bibitem{domeniconi2019cush}
Domeniconi, G., Lee, E.K., Venkataswamy, V., Dola, S.: Cush: Cognitive
  scheduler for heterogeneous high performance computing system. In: DRL4KDD
  19: Workshop on Deep Reinforcement Learning for Knowledge Discover (2019)

\bibitem{feitelson1996packing}
Feitelson, D.G.: Packing schemes for gang scheduling. In: Workshop on Job
  Scheduling Strategies for Parallel Processing. Springer (1996)

\bibitem{feitelson1996toward}
Feitelson, D.G., Rudolph, L.: Toward convergence in job schedulers for parallel
  supercomputers. In: Workshop on Job Scheduling Strategies for Parallel
  Processing. Springer (1996)

\bibitem{feitelson2014experience}
Feitelson, D.G., Tsafrir, D., Krakov, D.: Experience with using the parallel
  workloads archive. Journal of Parallel and Distributed Computing  (2014)

\bibitem{gainaru2019speculative}
Gainaru, A., Aupy, G.P., Sun, H., Raghavan, P.: Speculative scheduling for
  stochastic hpc applications. In: Proceedings of the 48th International
  Conference on Parallel Processing. pp. 32:1--32:10. ICPP 2019, ACM (2019).
  \doi{10.1145/3337821.3337890}

\bibitem{2011drf}
Ghodsi, A., Zaharia, M., Hindman, B., Konwinski, A., Shenker, S., Stoica, I.:
  Dominant resource fairness: Fair allocation of multiple resource types.
  USENIX Symposium on Networked Systems Design and Implementation  (2011)

\bibitem{2011mesos}
Hindman, B., Konwinski, A., Zaharia, M., Ghodsi, A., Joseph, A.D., Katz, R.H.,
  Shenker, S., Stoica, I.: Mesos: A platform for fine-grained resource sharing
  in the data center.  (2011)

\bibitem{janus2017slo}
Janus, P., Rzadca, K.: Slo-aware colocation of data center tasks based on
  instantaneous processor requirements. arXiv preprint arXiv:1709.01384  (2017)

\bibitem{slurm02}
Jette, M.A., Yoo, A.B., Grondona, M.: Slurm: Simple linux utility for resource
  management. In: In Lecture Notes in Computer Science: Proceedings of Job
  Scheduling Strategies for Parallel Processing (JSSPP). Springer-Verlag (2003)

\bibitem{kolter2011redd}
Kolter, J.Z., Johnson, M.J.: Redd: A public data set for energy disaggregation
  research. In: Workshop on Data Mining Applications in Sustainability
  (SIGKDD), San Diego, CA (2011)

\bibitem{lakew2015performance}
Lakew, E.B., Klein, C., Hernandez-Rodriguez, F., Elmroth, E.: Performance-based
  service differentiation in clouds. In: 2015 15th IEEE/ACM International
  Symposium on Cluster, Cloud and Grid Computing. IEEE (2015)

\bibitem{li2020dynamic}
Li, Y., Sun, D., Lee, B.C.: Dynamic colocation policies with reinforcement
  learning. ACM Transactions on Architecture and Code Optimization (TACO)
  (2020)

\bibitem{li2014dynamic}
Li, Y., Tang, X., Cai, W.: On dynamic bin packing for resource allocation in
  the cloud. In: Proceedings of the 26th ACM symposium on Parallelism in
  algorithms and architectures. ACM (2014)

\bibitem{lifka1995anlibm}
Lifka, D.: The anl/ibm sp scheduling system. In: Job Scheduling Strategies for
  Parallel Processing. IEEE (1995)

\bibitem{mao2019learning}
Mao, H., Schwarzkopf, M., Venkatakrishnan, S.B., Meng, Z., Alizadeh, M.:
  Learning scheduling algorithms for data processing clusters. In: ACM Special
  Interest Group on Data Communication (2019)

\bibitem{mell2011nist}
Mell, P., Grance, T., et~al.: The nist definition of cloud computing  (2011)

\bibitem{cgroup}
Menage, P.B.: Adding generic process containers to the linux kernel. In:
  Proceedings of the Linux Symposium. Citeseer (2007)

\bibitem{monahan1982state}
Monahan, G.E.: State of the art - a survey of partially observable markov
  decision processes: theory, models, and algorithms. Management science
  (1982)

\bibitem{moradi2020upredict}
Moradi, H., Wang, W., Fernandez, A., Zhu, D.: upredict: A user-level
  profiler-based predictive framework in multi-tenant clouds. In: 2020 IEEE
  International Conference on Cloud Engineering (IC2E). IEEE (2020)

\bibitem{patel2020clite}
Patel, T., Tiwari, D.: Clite: Efficient and qos-aware co-location of multiple
  latency-critical jobs for warehouse scale computers. In: 2020 IEEE
  International Symposium on High Performance Computer Architecture (HPCA).
  IEEE (2020)

\bibitem{reiss2012heterogeneity}
Reiss, C., Tumanov, A., Ganger, G.R., Katz, R.H., Kozuch, M.A.: Heterogeneity
  and dynamicity of clouds at scale: Google trace analysis. In: Proceedings of
  the Third ACM Symposium on Cloud Computing (2012)

\bibitem{reuther2018scalable}
Reuther, A., Byun, C., Arcand, W., Bestor, D., Bergeron, B., Hubbell, M.,
  Jones, M., Michaleas, P., Prout, A., Rosa, A., et~al.: Scalable system
  scheduling for hpc and big data. Journal of Parallel and Distributed
  Computing  (2018)

\bibitem{rocchetti2016penalty}
Rocchetti, N., Da~Silva, M., Nesmachnow, S., Tchernykh, A.: Penalty scheduling
  policy applying user estimates and aging for supercomputing centers. In:
  Latin American High Performance Computing Conference. pp. 49--60. Springer
  (2016)

\bibitem{rodrigo2015a2l2}
Rodrigo~{\'A}lvarez, G.P., {\"O}stberg, P.O., Elmroth, E., Ramakrishnan, L.:
  A2l2: An application aware flexible hpc scheduling model for low-latency
  allocation. In: Proceedings of the 8th International Workshop on
  Virtualization Technologies in Distributed Computing. ACM (2015)

\bibitem{rustad2013numascale}
Rustad, E.: Numascale: Numaconnect (2013),
  \url{https://www.numascale.com/index.php/numascale-whitepapers/}

\bibitem{asa}
Souza, A., Pelckmans, K., Ghoshal, D., Ramakrishnan, L., Tordsson, J.: Asa --
  the adaptive scheduling architecture. In: The 29th International Symposium on
  High-Performance Parallel and Distributed Computing. ACM (2020)

\bibitem{torque06}
Staples, G.: Torque resource manager. In: Proceedings of the 2006 ACM/IEEE
  Conference on Supercomputing. SC'06, ACM (2006)

\bibitem{stevens2020ai}
Stevens, R., Taylor, V., Nichols, J., Maccabe, A.B., Yelick, K., Brown, D.: Ai
  for science. Tech. rep., Argonne National Lab.(ANL), Argonne, IL (United
  States) (2020)

\bibitem{thamsenhugo}
Thamsen, L., Verbitskiy, I., Nedelkoski, S., Tran, V.T., Meyer, V., Xavier,
  M.G., Kao, O., De~Rose, C.A.: Hugo: a cluster scheduler that efficiently
  learns to select complementary data-parallel jobs. In: European Conference on
  Parallel Processing. Springer (2019)

\bibitem{tirmazi2020borg}
Tirmazi, M., Barker, A., Deng, N., Haque, M.E., Gene~Qin, Z., Hand, S.,
  Harchol-Balter, M., Wilkes, J.: Borg: the next generation. SIGOPS European
  Conference on Computer Systems (EuroSys'20)  (2020)

\bibitem{uchronski2018user}
Uchro{\'n}ski, M., Bo{\.z}ejko, W., Krajewski, Z., Tykierko, M., Wodecki, M.:
  User estimates inaccuracy study in hpc scheduler. In: International
  Conference on Dependability and Complex Systems. pp. 504--514. Springer
  (2018)

\bibitem{yang2013bubble}
Yang, H., Breslow, A., Mars, J., Tang, L.: Bubble-flux: Precise online qos
  management for increased utilization in warehouse scale computers. In: ACM
  SIGARCH Computer Architecture News. ACM (2013)

\bibitem{zhang2019rlscheduler}
Zhang, D., Dai, D., He, Y., Bao, F.S.: Rlscheduler: Learn to schedule hpc batch
  jobs using deep reinforcement learning. arXiv preprint arXiv:1910.08925
  (2019)

\end{thebibliography}

\end{document}